\newtheorem{thm}{Theorem}
\newtheorem{corollary}{Corollary}
\newtheorem{lemma}{Lemma}
\newcommand{\perm}{\mathop{\textup{Per}}}
\newcommand{\EX}{\mathop{{}\mathbb{E}}}
\newcommand{\bigo}[1]{\mathcal{O}(#1)}
\newcommand{\biggo}[1]{\mathcal{O}\bigg(#1\bigg)}
\newcommand{\multichoose}[2]{(^{#1+#2-1}_{\hphantom{#1+}#2})}
\newcommand*\Let[2]{\State #1 $\gets$ #2}
\newcommand{\Bkl}{B^{\diamond}_{k,\ell}}
\newcommand{\Bk}{B^{\diamond}_{k}}
\def\BS/{Boson Sampling}
\title{Faster classical \BS/ }
\author{Peter Clifford\\
 {\normalsize Department of Statistics}\\
  {\normalsize University of Oxford}\\
  {\normalsize United Kingdom}\\
  \and
  Rapha\"el Clifford\\
  {\normalsize Department of Computer Science}\\
  {\normalsize University of Bristol}\\
  {\normalsize  United Kingdom}\\
  }
\date{\vspace{-1ex}}  
\begin{document}

\maketitle
\thispagestyle{empty}

\begin{abstract}
	 Since its introduction \BS/ has been the subject of intense study in the world of quantum computing. The task is to sample independently from the set of all $n \times n$ submatrices built from possibly repeated rows of a larger $m \times n$ complex matrix according to a probability distribution related to the permanents of the submatrices.  Experimental systems exploiting quantum photonic effects  can in principle perform the task at great speed.  In the framework of classical computing, Aaronson and Arkhipov~(2011) showed that exact \BS/ problem cannot be solved in polynomial time  unless the polynomial hierarchy collapses to the third level. Indeed for a number of years the fastest known exact classical algorithm ran in $\bigo{\multichoose{m}{n}\, n 2^n }$ time per sample, emphasising the potential speed advantage of quantum computation. The advantage was reduced by Clifford and Clifford~(2018) who gave a significantly faster classical solution taking $\bigo{n 2^n + \poly(m,n)}$ time and linear space, matching the complexity of computing the permanent of a single matrix when $m$ is polynomial in $n$.
	
   We continue by presenting an algorithm for \BS/ whose average-case time complexity is much  faster when $m$ is proportional to 
$n$. In particular, when $m = n$ our algorithm runs in approximately  $O(n\cdot1.69^n)$ time on average.  This result further increases the problem size needed to establish quantum computational supremacy via \BS/.
\end{abstract}

\setcounter{page}{1}
\section{Introduction}

The search for so-called quantum computational supremacy has garnered a great deal of interest and investment in recent years.  One of the most promising candidates for this goal was introduced at STOC~'11 by \citeauthor{aaronson2011computational} who described an experimental set-up in linear optics known as \BS/.  

Since its introduction, \BS/ has attracted a great deal of attention with numerous experimental efforts around the world attempting implementations for various problem sizes~\citep[see e.g.\@][]{spring2013boson,bentivegna2015experimental,broome2013photonic,tillmann2013experimental,crespi2013integrated,spagnolo2014experimental,LSS:2016}. The ultimate goal is to exhibit a physical quantum experiment of such a scale that it would be hard if not impossible to simulate the output classically and thereby to establish so-called `quantum supremacy'.    In terms of the physical \BS/ experiment, $n$ corresponds to the number of photons and $m$ the number of output modes and increasing either of these is difficult in practice. Progress has therefore been slow \citep[see][and the references therein]{LBR:2017} with the experimental record until recently being $n = 5,  m = 9$ \citep[]{Wang2016}.   However in a breakthrough result in 2019, \citeauthor{Wang:2019} demonstrated a \BS/ experiment with $n=20$ photons and $m=60$. Their \BS/ experimental setup has $20$ input photons of which $14$ are detected, following the model of \citet{AB:2016}.

Phrased in purely mathematical terms, the task is to generate independent random samples from a particular probability distribution on all multisets of size $n$ with elements chosen from $[m]$ as follows. It is convenient to represent a multiset by an array  $\mathbf{z} = (z_1,\dots,z_n)$ consisting of elements of the multiset in non-decreasing order. We denote the set of distinct values of $\mathbf{z}$ by $\Phi_{m,n}$, with $\mu(\mathbf{z}) = \prod_{j=1}^m s_j!$ where $s_j$ is the multiplicity of the value $j$ in $\mathbf{z}$. The cardinality of $\Phi_{m,n}$ is known to be $\multichoose{m}{n}$ -- see \citet{Feller:1968}, for example.

Now let $A^{[n]} = (a_{ij})$ be the complex valued $m \times n$ matrix consisting of the first $n$ columns of a given 
$m$-dimensional  Haar random unitary matrix, $A$. For each $\mathbf{z}$, build an $n \times n$ matrix $A^{[n]}_{\mathbf{z}}$ where the $k$-th row  of $A^{[n]}_{\mathbf{z}}$ is row $z_k$ in $A^{[n]}$ for $k=1,\dots,n$ and define a probability mass function (pmf) on $\Phi_{m,n}$ as

\begin{equation}\label{eq:Bz_s}
q(\mathbf{z}|A) 
= \frac{1}{\mu(\mathbf{z})} \left| \perm A^{[n]}_{\mathbf{z}}\right|^2 
\overset{\text{defn}}{=} \frac{1}{\mu(\mathbf{z})} \left| \sum_{\sigma} \prod_{k=1}^n a_{z_k,\sigma_k}\right|^2, \quad \mathbf{z} \in \Phi_{m,n},
\end{equation}
where $\perm A^{[n]}_{\mathbf{z}}$ is the permanent of $A^{[n]}_{\mathbf{z}}$ and in the definition the summation is for all  $\sigma \in \pi[n]$,  the set of permutations of $[n]$.  For given $A$, the  \textit{\BS/ problem} is to simulate random samples from the pmf $q(\mathbf{z}|A)$ either with a quantum photonic device or with classical computing hardware. 

\citeauthor{aaronson2011computational} show that exact \BS/ is not efficiently solvable by a classical computer unless $\P^{\#\P}  = \BPP^{\NP}$ and the polynomial hierarchy collapses to the third level. Although the original proof restricted the range of $m$ and $n$ a  brute force evaluation of the probabilities of each multi\-set, as a preliminary to random sampling, requires the calculation of $\multichoose{m}{n}$  permanents of $n \times n$ matrices, each one of which takes $\bigo{n 2^n}$ time with the fastest known algorithms. Previously we gave a faster classical sampling algorithm running in  $\bigo{n 2^n + \poly(m,n)}$ time and linear space \citep{clifford2018classical}.    This suggested that at least $50$ photons would be needed in any Boson Sampling experiment to demonstrate so-called quantum computational supremacy.  A more recent fine grained complexity theoretic analysis of  \BS/ suggests that in fact $90$ input photons will be needed to achieve quantum computational supremacy~\citep{DHKLR:2020}.

In practical terms increasing the number of input photons is not the only difficultly when performing \BS/.  Experiments also get increasingly difficult to perform as the number of output modes increases. Current experiments, for example, have typically set the number of modes to be a small multiple of the number of photons and it is likely this trend will continue in the near future.

The original hardness result for exact \BS/ of \citeauthor{aaronson2011computational} required that $m \geq 2n$ as a crucial step in their reduction. For approximate \BS/ the number of outputs has to increase at least quadratically with the number of input photons for any known hardness results to apply. On the other hand, there is no existing evidence that approximate \BS/ is easier than exact \BS/ for any set of parameters.  Moreover, 
 \citet{GS:2018} showed that computing the permanent of a unitary matrix itself is $\#\P$-hard. By applying this result to the proof technique of  \citeauthor{aaronson2011computational} it follows that even with the number of output modes equal to the number of input photons, \BS/ cannot be performed in polynomial time classically unless $\P^{\#\P}  = \BPP^{\NP}$.  This raises the question of whether experimentalists should now focus on increasing the number of input photons alone, keeping the number of output modes to be a small multiple of, or even equal to, the number of input modes.

We answer this question in the negative and show that if $m = \theta n$ for constant $\theta \geq 1$ then it is indeed possible to sample from the \BS/ distribution significantly more quickly than was known before.   We establish the expected complexity averaged over all realisations. For simplicity, we first give the result for the case $m = n$. The time complexity for the general case is given in Section \ref{sec:av-case}.
\vspace{0.5em}
\begin{thm}\label{thm1} The expected time complexity of \BS/ with $m = n$ is
$$\bigo{n \rho ^n},\quad \text{where $\rho = \textstyle{\frac{27}{16}} \approx 1.69$}.$$
The additional space complexity on top of that needed to store the input is $\bigo{m}$.
\end{thm}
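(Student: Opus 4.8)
The plan is to build directly on the chain-rule sampler of \citet{clifford2018classical}, which generates $\mathbf{z}$ one coordinate at a time and whose per-sample cost is dominated by the last step, where one must evaluate a permanent of an $n \times n$ submatrix built from the chosen rows through a Glynn/Ryser formula, yielding the $\bigo{n2^n}$ bound. The one new ingredient I would insert is a faster permanent subroutine that exploits the repeated rows that necessarily occur when $m=n$. Suppose the sampled multiset $\mathbf{z}$ has distinct values with multiplicities $s_1,\dots,s_d$, so that $\sum_j s_j = n$ and $A^{[n]}_{\mathbf{z}}$ has $d$ blocks of identical rows. In Glynn's formula the $2^{n-1}$ sign vectors can be grouped according to the signed sum within each block: a block of size $s$ contributes only $s+1$ distinct signed sums, so the whole sum collapses to $\prod_{j=1}^d (s_j+1)$ distinct terms, each evaluated in $\bigo{n}$ time with incremental (Gray-code) updates and $\bigo{m}$ working space. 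Consequently the cost of producing one sample is $\bigo{n\prod_j (s_j+1)}$, degrading to the usual $\bigo{n2^n}$ only in the worst case where every row is distinct.

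The second half of the argument is the average-case analysis, taken over the Haar-random unitary $A$ together with the sampler's internal randomness. The key observation is that $\EX_A\!\left[\,\bigl|\perm A^{[n]}_{\mathbf{z}}\bigr|^2\right]$ is proportional to $\mu(\mathbf{z})$ for columns of a Haar-random unitary (a standard Weingarten-type identity; for $m=n$ one gets the rising-factorial constant $n!\,/\,[n(n+1)\cdots(2n-1)]$). The $1/\mu(\mathbf{z})$ weight in \eqref{eq:Bz_s} then cancels exactly, so $\EX_A[\,q(\mathbf{z}\mid A)\,]$ is constant over $\Phi_{n,n}$; indeed it equals $1/\multichoose{n}{n}$, which sums to $1$ as it must. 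Averaged over $A$ the sampled multiset is therefore \emph{uniform} on the $\multichoose{n}{n}=\binom{2n-1}{n}$ multisets of size $n$ from $[n]$, equivalently uniform on the multiplicity vectors $(s_1,\dots,s_n)$ with $s_j\ge 0$ and $\sum_j s_j=n$. Since the running time depends only on this multiplicity profile, linearity of expectation lets me replace the matrix-dependent sampling law by the uniform law when computing the expected cost.

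It then remains to evaluate $\EX[\prod_j(s_j+1)]$ under the uniform law on multisets, which I would do by generating functions. Writing each factor as $\sum_{s\ge 0}(s+1)x^s=(1-x)^{-2}$, the total weight of all compositions of $n$ into $n$ nonnegative parts is $[x^n](1-x)^{-2n}=\binom{3n-1}{n}$, while the number of multisets is $[x^n](1-x)^{-n}=\binom{2n-1}{n}$. Hence
$$\EX\Big[\prod_{j}(s_j+1)\Big]=\frac{\binom{3n-1}{n}}{\binom{2n-1}{n}}=\Theta\!\left(\rho^{\,n}\right),\qquad \rho=\tfrac{27}{16},$$
because $\binom{3n}{n}=\Theta\!\left((27/4)^n\right)$ and $\binom{2n-1}{n}=\Theta\!\left(4^n\right)$ up to matching $\sqrt{n}$ factors that cancel in the ratio. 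Multiplying by the $\bigo{n}$ per-sample overhead gives the claimed $\bigo{n\rho^n}$, and the $\bigo{m}$ space is inherited from the linear-space permanent subroutine and the $\bigo{m}$ accumulators maintained across chain-rule steps.

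The hard part, I expect, will be the honest cost accounting of the chain rule rather than either the probabilistic identity or the generating-function evaluation. Concretely I must verify that the work summed over all $n$ sampling steps --- each of which loops over up to $m=n$ candidate rows and evaluates a permanent whose multiplicity profile is a prefix of the final one --- is genuinely dominated by the last step and carries only a single polynomial factor, so that the total is $\bigo{n\prod_j(s_j+1)}$ and not $\bigo{n^2\prod_j(s_j+1)}$ or worse. This calls for showing that the per-step costs grow geometrically (so they telescope to the final term) and that the candidate loop can reuse the precomputed block sums of the fixed prefix. A secondary technical point is justifying the interchange of the two expectations and controlling lower-order factors tightly enough that the bound is exactly $\bigo{n\rho^n}$.
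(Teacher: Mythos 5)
Your route is essentially the paper's: the chain-rule sampler of Clifford and Clifford (2018), a reduced-rank permanent subroutine costing $\bigo{k\prod_\nu(s_\nu+1)}$, the Weingarten/Arkhipov--Kuperberg fact that the marginal distribution averaged over Haar-random $A$ is uniform on multisets, and the identity $\EX\prod_\nu(s_\nu+1)=\binom{2n-1}{n}^{-1}\binom{3n-1}{n}=\Theta((27/16)^n)$ (your generating-function evaluation and the paper's stars-and-bars count are the same computation). But the part you defer as ``the hard part'' is exactly where the paper supplies two ideas you do not, and both are needed to get $\bigo{n\rho^n}$ rather than $\bigo{n^2\rho^n}$. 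First, at stage $k$ the sampler does not evaluate one permanent per candidate row: it computes the $k$ minors $\{\perm\Bkl,\ \ell\in[k]\}$ of the fixed prefix matrix once, and then scores all $m$ candidates in $\bigo{mk}$ time by Laplace expansion. Reusing the block sums $w_j(\mathbf{r})=\sum_\nu r_\nu a_{\nu,j}$ across $\ell$, as you suggest, is not enough: each Guan-code step still appears to require the $k$ products $\prod_{j\ne\ell}w_j(\mathbf{r})$, i.e.\ $\bigo{k^2}$ work per step and a final bound of $\bigo{n^2\rho^n}$. The paper's Lemma~\ref{minors} closes this with forward and backward cumulative products $f_\ell=\prod_{j<\ell}w_j$ and $b_\ell=\prod_{j>\ell}w_j$, recovering all $k$ products in $\bigo{k}$ per step, so the whole collection of minors costs no more than a single permanent.

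Second, your claim that ``the per-step costs grow geometrically, so they telescope to the final term'' is not justified as stated: the pointwise bound $\prod_\nu(s_\nu^{(k)}+1)\le\prod_\nu(s_\nu^{(n)}+1)$ only yields $\bigo{n^2\rho^n}$ after summing over $k$, and the stage-$k$ multiplicity profile is a random prefix whose law you have not pinned down. The paper instead uses the structural property of the chain-rule sampler that the sorted prefix $(r_1,\dots,r_k)$ is itself an exact sample from the \BS/ distribution on $\Phi_{m,k}$ (same $m$, photon number $k$), so Theorem~\ref{thm:AK} and Lemma~\ref{lm:multiplicity} apply at every stage and give expected stage cost $k\binom{m+k-1}{k}^{-1}\binom{2m+k-1}{k}$; the closed-form sum of these over $k$ is a bounded multiple of the final term, which is what delivers the stated $\bigo{n\rho^n}$. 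Your probabilistic identity, the generating-function evaluation, and the Fubini interchange are all fine; it is these two accounting steps that must be made precise.
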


Our sampling algorithm also applies directly to the closely related problem of scattershot \BS/ \citep[]{bentivegna2015experimental}. From a mathematical perspective this produces no additional complications beyond the specification of a different $m \times n$ matrix. Once the matrix is specified we can apply our new sampling algorithm directly. 

\section{Related work}
\citet{TD:2004} were the first to recognise that studying the complexity of sampling from low-depth quantum circuits could be useful for demonstrating a separation between quantum and classical computation. Since that time the goal of finding complexity separations for quantum sampling problems has received a great deal of interest. We refer the interested reader to \citet{LBR:2017} and \citet{HM:2017} for surveys on the topic.

The search for faster classical algorithms for \BS/ has been of central interest since the problem was first explicitly formulated  by \citeauthor{aaronson2011computational}.  The first significant breakthrough was a Markov chain Monte Carlo (MCMC) sampling procedure developed by \citet{neville2017classical}. The overall approach of MCMC is to take samples from some easy to compute proposal distribution and then to accept them depending on their individual likelihoods.  \citeauthor{neville2017classical} were able to provide numerical evidence that for limited problem sizes only approximately $200$ such permanent computations were needed to take one sample approximately from the \BS/ distribution. This MCMC approach is however necessarily approximate and does not give provable bounds on the quality of its approximation.

In a previous paper \citep{clifford2018classical} we presented a faster and provably correct \BS/ algorithm running in  $\bigo{n 2^n + \poly(m,n)}$ time per sample,  costing approximately two permanent calculations of $n \times n$ matrices for all values of $n$. In recent papers the algorithm has been extended and adapted to examine the effect of non-uniform losses and binned input/output modes. See \citet{moylett2019classically,Shchesnovich:2019,brod2020classical} and the references therein.

\section{Methods}

\subsection{Computing the permanent of low rank matrices}

The complexity of computing the permanent of an arbitrary $k \times k$ complex matrix was shown to be $\bigo{k^2 2^k}$~\citep{Ryser:1963} and subsequently $\bigo{k 2^k}$ \citep{NW:1978,Glynn:2010}. The computation time is decreased when there are several identical rows or columns, resulting in a matrix of reduced rank~\citep{,Tichy:2011,shchesnovich2013asymptotic,Shchesnovich:2019}.  We will show how these ideas can be implemented in practice to produce a faster algorithm that suits our needs more closely. 

Let $A^{[k]}$ be the first $k$ columns of a complex $m \times m$ matrix, $A$. Assume the rows of $A^{[k]}$ are distinct and let $B$ be a $k \times k$  matrix with repeated rows drawn from $A^{[k]}$. Specifically, let $\mathbf{z}$ be a multiset of size $k$ with elements from $[m]$ and let $\mathbf{s} = (s_1,\dots,s_m)$ be its associated array of multiplicities, then the $i$th row of $B$ will be row $z_i$ of $A^{[k]}$, for $i = 1,\dots,k$. 

 According to Ryser's formula
\begin{equation}\label{eq:ryser1}
\perm B = (-1)^k \sum_{T \subseteq [k]}(-1)^{|T|} \prod_{j=1}^k \sum_{i \in T} b_{i,j}.
\end{equation}
The subset $T$ can be written as $\cup_{\nu=1}^m T_\nu$ where $T_\nu = \{i:z_i = \nu\}$.  Note there are $s_\nu$ elements in $\{i:z_i = \nu\}.$  It follows there are $\binom{s_\nu}{r_\nu}$ ways of choosing a subset $T_\nu$ of a given size $r_\nu$, when $r_\nu$ is between $0$ and $s_\nu$. Furthermore $b_{i,j} = a_{\nu,j}$ whenever $i \in T_\nu$. 

Ryser's formula can then be expressed as
\begin{equation}\label{eq:ryser2}
\perm B = (-1)^k \sum_{r_1 = 0}^{s_1} \cdots \sum_{r_m = 0}^{s_m} (-1)^{r_1 + \cdots +r_m}\prod_{\nu=1}^m \binom{s_\nu}{r_\nu}  \prod_{j=1}^k \left(\sum_{\nu=1}^m r_\nu a_{\nu,j}\right).
\end{equation}

A straightforward implementation of the formula requires us to iterate over the tuples $(r_1,\dots, r_m)$. Since each term in the summation requires $\bigo{k m}$ operations, the run time is then $\bigo{k m \prod_{\nu=1}^m (s_\nu+1)}$ as shown by \citet{shchesnovich2013asymptotic}. Similar complexity is achieved but with an improved constant factor overhead when starting from Glynn's formula \citep{chin2018generalized}. 

To make this iteration as fast as possible we would ideally like to perform the iteration in such a way that at each stage we change only one of the $r_i$ values and these are changed by $\pm 1$.  This can be achieved using Guan codes (otherwise known as generalised Gray codes)~\citep{Guan:1998}, borrowing from the idea of \citet{NW:1978} who used a basic Gray code to speed up Ryser's algorithm.
\vspace{0.5em}

\begin{thm}\label{thm:guan} Using Guan codes, $\perm B$ can be calculated in $\bigo{k \prod_{\nu=1}^m (s_\nu+1)}$ time.
\end{thm}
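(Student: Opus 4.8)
The plan is to start from the reorganised Ryser formula \eqref{eq:ryser2} and eliminate the redundant recomputation that makes a naive evaluation cost $\bigo{k m \prod_{\nu=1}^m (s_\nu+1)}$. The only expensive part of each summand is the inner quantity $\prod_{j=1}^k\bigl(\sum_{\nu=1}^m r_\nu a_{\nu,j}\bigr)$: forming the $k$ column sums $\sum_{\nu=1}^m r_\nu a_{\nu,j}$ from scratch costs $\bigo{k m}$ per tuple $(r_1,\dots,r_m)$. The idea is to enumerate the tuples not in arbitrary order but along a \emph{Guan code} — a Hamiltonian path through the mixed-radix grid $\prod_{\nu=1}^m\{0,1,\dots,s_\nu\}$ in which consecutive tuples differ in exactly one coordinate and by exactly $\pm1$ — so that these column sums, and the remaining auxiliary factors, can be maintained incrementally rather than rebuilt at every step.

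Concretely, I would carry three pieces of running state as the traversal proceeds: a vector $\mathbf{x}=(x_1,\dots,x_k)$ with $x_j=\sum_{\nu=1}^m r_\nu a_{\nu,j}$, the running sign $(-1)^{r_1+\cdots+r_m}$, and the running product of binomials $\prod_{\nu=1}^m\binom{s_\nu}{r_\nu}$. When a step of the code changes coordinate $r_\nu$ to $r_\nu\pm1$, I update $x_j\leftarrow x_j\pm a_{\nu,j}$ for each $j$ (cost $\bigo{k}$), flip the sign (cost $\bigo{1}$), and rescale the binomial product by the single ratio $\binom{s_\nu}{r_\nu\pm1}/\binom{s_\nu}{r_\nu}$, which equals $(s_\nu-r_\nu)/(r_\nu+1)$ or its reciprocal (cost $\bigo{1}$). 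The contribution of the current tuple to $\perm B$ is then the product of the running sign, the running binomial product, and $\prod_{j=1}^k x_j$. This last product I would recompute afresh in $\bigo{k}$ time, because altering a single $r_\nu$ perturbs \emph{every} entry $x_j$, so the column product cannot itself be maintained in sublinear time; recomputing it is nevertheless within budget.

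For the enumeration I would invoke the standard fact that such mixed-radix (generalised) Gray codes exist and that each successor — that is, which coordinate to change and in which direction — can be produced with $\bigo{1}$ amortised (indeed loopless) overhead using per-coordinate direction flags and focus pointers, following \citet{Guan:1998}. The traversal therefore contributes only $\bigo{\prod_{\nu=1}^m(s_\nu+1)}$ in total. Since the grid has exactly $\prod_{\nu=1}^m(s_\nu+1)$ nodes and each is processed in $\bigo{k}$ time, summing the contributions and multiplying by the overall prefactor $(-1)^k$ yields $\perm B$ in $\bigo{k\prod_{\nu=1}^m(s_\nu+1)}$ time, as claimed.

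The substantive points to nail down, rather than the arithmetic, are twofold. First, one must confirm that the Guan-code traversal genuinely visits every tuple exactly once using only single-coordinate $\pm1$ moves and that its per-step bookkeeping is truly $\bigo{1}$; this is the part where the generation machinery has to be spelled out carefully. Second, one must check that all incremental updates stay within $\bigo{k}$, the only delicate case being the binomial product, where the multiply-then-divide rescaling must be argued to remain exact and constant-time — it does, since each $\binom{s_\nu}{r_\nu}$ is a positive integer and the ratios involve only the small integers $s_\nu-r_\nu$ and $r_\nu+1$. I expect the main obstacle to be conceptual rather than computational: the whole saving rests on maintaining $\mathbf{x}$ incrementally so that per-step work is $\bigo{k}$ and not $\bigo{k m}$, and the argument must make clear that the Guan ordering is precisely what makes this incremental maintenance possible while leaving the unavoidable $\bigo{k}$ cost of the per-term product as the dominant term.
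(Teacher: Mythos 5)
Your proposal is correct and follows essentially the same route as the paper's own proof: traverse the tuples $(r_1,\dots,r_m)$ in Guan-code order so that consecutive tuples differ by $\pm 1$ in a single coordinate, maintain the column sums $\sum_{\nu} r_\nu a_{\nu,j}$ incrementally in $\bigo{k}$ per step, recompute the product over $j\in[k]$ in $\bigo{k}$, and update the sign and binomial factor in $\bigo{1}$. Your additional remarks on the loopless successor generation and the exactness of the binomial rescaling are fine elaborations of details the paper leaves implicit.
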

\begin{proof}
	There are $\prod_{\nu=1}^m (s_\nu+1)$ terms in the outside set of summations in \eqref{eq:ryser2}.  By using Guan's algorithm we can move through the tuples $(r_1,\dots,r_m)$ exhaustively, adding or subtracting $1$ from a single element. This means that the set of values $\sum_{\nu=1}^m r_\nu a_{\nu,j}, j \in [k]$ can be updated in $k$ operations. The product over $j \in [k]$ is a further $k$ operations. Updating the relevant binomial term is a $\bigo{1}$ operation. The total operation count is then $\bigo{k \prod_{\nu=1}^m (s_\nu+1)}$.
\end{proof}

\subsection{Laplace expansion}\label{sec:Laplace}
We make extensive use of the Laplace expansion for permanents \citep[see][page 578]{marcus1965permanents}, namely that for any $k \times k$ matrix $B = (b_{i,j})$,
\begin{equation}\label{eq:Laplace}
\perm B = \sum_{\ell = 1}^k b_{k,\ell} \perm \Bkl, \nonumber
\end{equation}
where $\Bkl$ is the submatrix with row $k$ and column $\ell$ removed.  Note that $\Bkl$ only depends on $\Bk$ the submatrix of $B$ with the $k$-th row removed.  An important consequence is that when $B$ is modified by changing its  $k$-th row, the modified permanent can be calculated in $\bigo{k}$ steps,  provided the values $\{\perm \Bkl\}$ are available. As explained in \citet{clifford2018classical}, we can take advantage of this observation to quickly compute a set of permanents of matrices each with one row differing from the other.

We now show that computation of all of the values $\{\perm \Bkl, \ell \in [k]\}$ has the same time complexity as computing $\perm B$, the permanent of a single $k \times k$ matrix when there are repeated rows in $B$.   We derive this result for matrices with repeated rows using a combination of Ryser's algorithm and Guan codes.
   
\vspace{0.5em}
\begin{lemma}\label{minors}
	Let $B$ be a $k \times k$ complex matrix with repeated rows specified by multiplicities $\mathbf{s}$ and let $\{\Bkl\}$ be submatrices of $B$ with row $k$ and column $\ell$ removed, $\ell \in [k]$. The collection $\{\perm  \Bkl, \ell \in [k]\}$ can be evaluated jointly with the same time complexity as that of $\perm B$, with $\bigo{k}$ additional space.  
\end{lemma}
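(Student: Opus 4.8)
The plan is to run the same Guan-code iteration that underlies Theorem~\ref{thm:guan}, but to accumulate all $k$ minors $\perm\Bkl$ simultaneously as the tuple $(r_1,\dots,r_m)$ sweeps through its range. First I would write Ryser's formula \eqref{eq:ryser2} for each $(k-1)\times(k-1)$ matrix $\Bkl$. Since $\Bkl$ depends only on $\Bk$, the relevant multiplicities are those of $\Bk$, namely $\mathbf{s}'$ with $s'_{z_k}=s_{z_k}-1$ and $s'_\nu=s_\nu$ otherwise, so that $\sum_\nu s'_\nu=k-1$. Writing $w_j=\sum_{\nu=1}^m r_\nu a_{\nu,j}$ for the partial column sums and $c(\mathbf{r})=(-1)^{\sum_\nu r_\nu}\prod_{\nu=1}^m\binom{s'_\nu}{r_\nu}$ for the shared coefficient, each minor becomes
\[
\perm\Bkl=(-1)^{k-1}\sum_{\mathbf{r}} c(\mathbf{r})\prod_{\substack{j=1\\ j\neq\ell}}^{k} w_j .
\]
The crucial observation is that all $k$ of these sums range over the \emph{same} set of tuples $\mathbf{r}$, with the \emph{same} coefficient $c(\mathbf{r})$ and the \emph{same} column sums $w_j$; they differ only in which single factor $w_\ell$ is omitted from the product.

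Next I would iterate over the tuples $\mathbf{r}$ using Guan codes exactly as in Theorem~\ref{thm:guan}, maintaining the vector $(w_1,\dots,w_k)$. Because consecutive tuples differ in a single coordinate $r_\nu$ by $\pm1$, the whole vector is updated in $\bigo{k}$ time per step via $w_j \leftarrow w_j \pm a_{\nu,j}$, and the coefficient $c(\mathbf{r})$ is updated in $\bigo{1}$ time. At each tuple I then need the $k$ ``leave-one-out'' products $\prod_{j\neq\ell}w_j$ for every $\ell\in[k]$. The naive route of forming $\prod_j w_j$ once and dividing by $w_\ell$ fails when some $w_\ell=0$, so instead I would build prefix products $P_\ell=\prod_{j<\ell}w_j$ and suffix products $S_\ell=\prod_{j>\ell}w_j$, whence $\prod_{j\neq\ell}w_j=P_\ell S_\ell$. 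Both arrays are formed in $\bigo{k}$ time and remain valid when a column sum vanishes, since no division is performed.

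Finally I would add $c(\mathbf{r})\,P_\ell S_\ell$ to the $\ell$-th of $k$ running accumulators and, after the sweep, multiply each by $(-1)^{k-1}$. The per-tuple cost is $\bigo{k}$ for the sum-vector update plus $\bigo{k}$ for the prefix/suffix products, and the number of tuples is $\prod_{\nu}(s'_\nu+1)\le\prod_\nu(s_\nu+1)$, so the total time is $\bigo{k\prod_\nu(s_\nu+1)}$, matching the cost of a single permanent in Theorem~\ref{thm:guan}. The working storage---the vector $w$, the two prefix/suffix arrays, and the $k$ accumulators---is $\bigo{k}$.

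The main obstacle is the leave-one-out products: one must obtain all $k$ of them in $\bigo{k}$ total time (rather than $\bigo{k^2}$, which would inflate the bound by a factor of $k$) while staying correct when column sums are zero, and the prefix/suffix construction is exactly what delivers both. A secondary point to verify is that the bookkeeping of the decremented multiplicity $s'_{z_k}$ and the global $(-1)^{k-1}$ sign is consistent across all $k$ minors, which it is precisely because that data is common to every $\Bkl$.
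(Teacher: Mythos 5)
Your proposal is correct and follows essentially the same route as the paper's proof: apply Ryser's formula with the decremented multiplicities of $\Bk$, sweep the tuples $\mathbf{r}$ in Guan-code order updating the column sums $w_j$ in $\bigo{k}$ per step, and obtain all $k$ leave-one-out products via forward and backward cumulative (prefix/suffix) products in $\bigo{k}$ total time, exactly the $f_\ell b_\ell$ decomposition used in the paper. Your explicit remark that this avoids division (and hence the zero-column-sum pitfall) is a point the paper leaves implicit but is the right motivation for the construction.
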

\begin{proof}[Proof of Lemma]
	By applying Ryser's formula ~\eqref{eq:ryser2} to $\Bkl$ for a given value of $\ell$ we have:
	
	\begin{equation}
	\perm  \Bkl = (-1)^k \sum_{r_1 = 0}^{s_1^\diamond} \cdots \sum_{r_m = 0}^{s_m^\diamond} (-1)^{r_1 + \cdots +r_m}\prod_{\nu=1}^m \binom{s_\nu^\diamond}{r_\nu}  \prod_{j \in [k] \setminus \ell}  w_j(\mathbf{r}).
	\end{equation}
	where $\mathbf{s}^\diamond$ is the multiplicity of repeated rows in $\Bk$ and $w_j(\mathbf{r}) = \sum_{\nu=1}^m r_\nu a_{\nu,j}$. 
	
	Working through values of $\mathbf{r}$ in Guan code order, the terms $\{w_j(\mathbf{r}), j \in [k]\}$ can be evaluated in $\bigo{k}$ combined time for every new $\mathbf{r}$. This is because successive $\mathbf{r}$ arrays differ by one in a single element. The product of the $w_j(\mathbf{r})$ terms can be computed in $\bigo{k}$ time giving $\bigo{k\prod_{\nu=1}^m (s^\diamond_\nu+1)}$ time to compute $\perm  \Bkl$ for a single value of $\ell$, but of course this has to be replicated $k$ times to cover all values of $\ell$. 
	
	To compute $\{\perm  \Bkl,  \ell \in [k]\}$ more efficiently we observe that each product $\prod_{j \in [k] \setminus \ell}  w_j(\mathbf{r})$ can be expressed as $f_\ell b_\ell$  where $f_\ell = \prod_{j=1}^{\ell-1} w_j(\mathbf{r}), \ell = 2,\dots,k$ and $b_\ell = \prod_{j=\ell+1}^k  w_j(\mathbf{r}), \ell = 1,\dots,k-1$ are forward and backward cumulative products, with $f_1 = b_k = 1$.  
	
	We can therefore compute all of the partial products $\prod_{j \in [k] \setminus \ell}   w_j(\mathbf{r})$ in $\bigo{k}$ time, giving an overall total time complexity of $\bigo{k\prod_{\nu=1}^m (s^\diamond_\nu+1)}$ for jointly computing $\{\perm  \Bkl, \ell \in [k]\}$, and since $\mathbf{s}^\diamond \leqslant \mathbf{s}$, the time complexity is as claimed. Furthermore the computation time has constant factor overheads similar to that of computing $\perm B$. Other than the original matrix, space used is dominated by the two arrays of cumulative products, both of length $\bigo{k}$. 
\end{proof}  


\section{\BS/ algorithm}

\citet{clifford2018classical} provide the following algorithm for \BS/:
\begin{algorithm}
	\caption{Boson sampler: single sample $\mathbf{z}$ from $q(\mathbf{z}|A)$ 
		\label{alg:better}}
	\begin{algorithmic}[1]
		\Require{$m$ and $n$ positive integers;  $m$-dimensional Haar random unitary matrix, $A$}
		\State $\mathbf{r} \gets \varnothing$ \Comment{\sc Empty array}
		\State $A^{[n]} \gets \textrm{Permute}(A^{[n]})$ \Comment{Randomly permute columns of $A^{[n]}$}
		\State $w_i \gets |a_{i,1}|^2, i \in [m]$ \Comment{Make indexed weight array $w$}
		\State $x \gets \textrm{Sample}(w)$ \Comment{Sample index $x$ from $w$}
		\State $\mathbf{r} \gets (\mathbf{r}, x)$ \Comment{Append $x$ to $\mathbf{r}$}
		\For{$k \gets 2 \textrm{ to } n$}
		\State $\Bk \gets A^{[k]}_{\mathbf{r}}$
		\State \sc Compute $\{\perm \Bkl, \ell \in [k] $\} \Comment{As Lemma \ref{minors}}
		\State $w_i \gets |\sum_{\ell = 1}^k a_{i,\ell} \perm \Bkl|^2,\; i \in [m]$ \Comment{Using Laplace expansion}
		\State $x \gets \textrm{Sample}(w)$ 
		\State $\mathbf{r} \gets (\mathbf{r}, x)$
		\EndFor
		\Let{$\mathbf{z}$}{\sc IncSort($\mathbf{r}$)} \Comment{Sort $\mathbf{r}$ in non-decreasing order}
		\State \Return{$\mathbf{z}$}
	\end{algorithmic}
\end{algorithm}

Calculation of the time complexity proceeds as in \citet{clifford2018classical}, with a few modifications to take account of repeated rows in evaluating permanents of the submatrices involved.  At stage $k$, let $\mathbf{s}^{(k)}$ be the multiplicities of the values in $(r_1,\dots,r_k)$.  The operation count in applying Lemma \ref{minors} is $\bigo{k \prod_{\nu=1}^m (s_\nu^{(k)}+1)}$. Using the Laplace expansion for permanents, as described in Section \ref{sec:Laplace}, an array of length $m$ is obtained by summing $k$ terms for each $r_k \in [m]$.  Taking a single sample from the pmf proportional to the array takes $\bigo{m}$ time. This gives the time complexity bound for stage $k$ of $\bigo{k \prod_{\nu=1}^m (s_\nu^{(k)}+1)}$ + $\bigo{mk}$ and hence a total operation count of 
\begin{equation}\label{eq:complexity}
\bigo{\textstyle{\sum_{k=1}^n k \prod_{\nu=1}^m (s_\nu^{(k)}+1)}} + \bigo{mn^2}.
\end{equation}

Importantly at the conclusion of stage $k$, the $k \times k$ matrix $A_{r_1,\dots, r_k}^{[k]}$ has been found.  In other words at this intermediate stage a random submatrix has been drawn for the \BS/ problem with $k$ photons and $m$ output modes. Equivalently, the multiset formed by sorting $(r_1,\dots, r_k)$ in non-decreasing order is then a random sample from the \BS/ distribution on $\Phi_{m,k}$. 

We now consider the average-case time complexity when the algorithm is applied with a random choice of Haar unitary matrix, $A$.

\section{Marginal uniformity of the \BS/ distribution}
Arkhipov and Kuperberg have shown that the marginal \BS/ pmf averaged over Haar random unitaries is uniform on the space of multisets.     
\vspace{0.5em} 
\begin{thm}\label{thm:AK}
With $q(\mathbf{z}|A)$ as in \eqref{eq:Bz_s} and $A$ a random matrix drawn from the $m$-dimensional Haar random unitary distribution, the marginal \BS/ pmf $\mathcal{Q}(\mathbf{z})$ is given by
\begin{equation}\label{eq:AK}
\mathcal{Q}(\mathbf{z}) = \EX q(\mathbf{z}|A) = \binom{m+n-1}{n}^{\!-1}\!, \quad \mathbf{z} \in \Phi_{m,n}.
\end{equation}
\end{thm}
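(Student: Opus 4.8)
The plan is to compute the Haar average directly via the Weingarten calculus for the unitary group $U(m)$; this will establish the uniformity of $\mathcal{Q}$ and pin down the exact value $\binom{m+n-1}{n}^{-1}$ in one stroke. First I would expand the squared modulus of the permanent as a double sum over the symmetric group $S_n$,
\[
\left|\perm A^{[n]}_{\mathbf{z}}\right|^2 = \sum_{p,q \in S_n} \prod_{k=1}^n a_{z_k, p(k)}\,\overline{a_{z_k, q(k)}},
\]
and take the expectation term by term. Each monomial $\EX\prod_k a_{z_k,p(k)}\overline{a_{z_k,q(k)}}$ is a balanced degree-$(n,n)$ polynomial in the entries of a Haar unitary, so the Weingarten formula applies and gives $\sum_{\sigma,\tau\in S_n}\bigl(\prod_k \delta_{z_k,z_{\sigma(k)}}\bigr)\bigl(\prod_k\delta_{p(k),q(\tau(k))}\bigr)\,\mathrm{Wg}(\tau\sigma^{-1},m)$, where $\mathrm{Wg}(\cdot,m)$ denotes the unitary Weingarten function.

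The key step is to collapse the Kronecker deltas. The row deltas $\prod_k\delta_{z_k,z_{\sigma(k)}}$ are nonzero precisely when $\sigma$ lies in the stabiliser $H_{\mathbf{z}}\le S_n$ of the array $\mathbf{z}$, a subgroup isomorphic to $\prod_j S_{s_j}$ and hence of order exactly $\mu(\mathbf{z})$. The column deltas $\prod_k\delta_{p(k),q(\tau(k))}$ force the single relation $\tau=q^{-1}p$. After substituting this and re-indexing $p\mapsto q\pi$, the inner sum $\sum_{\pi\in S_n}\mathrm{Wg}(\pi\sigma^{-1},m)$ becomes $W:=\sum_{\rho\in S_n}\mathrm{Wg}(\rho,m)$, independent of both $\sigma$ and $q$. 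Summing the remaining free indices then contributes a factor $n!$ (from $q$) and a factor $|H_{\mathbf{z}}|=\mu(\mathbf{z})$ (from $\sigma$), so that $\mathcal{Q}(\mathbf{z})\,\mu(\mathbf{z})=n!\,\mu(\mathbf{z})\,W$. The factor $\mu(\mathbf{z})$ cancels, leaving $\mathcal{Q}(\mathbf{z})=n!\,W$ for every $\mathbf{z}$; this cancellation is exactly where the $1/\mu(\mathbf{z})$ normalisation earns its keep and where uniformity falls out automatically.

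To finish I would invoke the standard Weingarten summation identity $\sum_{\rho\in S_n}\mathrm{Wg}(\rho,m)=\frac{(m-1)!}{(m+n-1)!}$, which yields
\[
\mathcal{Q}(\mathbf{z})=\frac{n!\,(m-1)!}{(m+n-1)!}=\binom{m+n-1}{n}^{-1}.
\]
The main obstacle is purely one of bookkeeping: matching the row and column index sets to the Weingarten formula with the correct orientation, and tracking how the four permutations $p,q,\sigma,\tau$ degenerate to the single scalar $W$ without dropping a factor.

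A conceptually cleaner—though less self-contained—alternative sidesteps the Weingarten machinery. One realises $q(\mathbf{z}|A)=\bigl|\langle \tilde{\mathbf{z}}|\,\rho(A)\,|\phi\rangle\bigr|^2$, where $\rho$ is the action of $U(m)$ on the $n$-boson symmetric subspace, $|\phi\rangle$ is the (unit) input occupation state, and $|\tilde{\mathbf{z}}\rangle$ is the normalised number state corresponding to $\mathbf{z}$ (the $1/\mu(\mathbf{z})$ being precisely the normalisation factor). Since $\rho$ is an irreducible representation of dimension $\binom{m+n-1}{n}$, Schur's lemma forces $\EX\,\rho(A)|\phi\rangle\langle\phi|\rho(A)^\dagger=\binom{m+n-1}{n}^{-1}I$, and reading off the diagonal $(\mathbf{z},\mathbf{z})$ entry gives the claim at once.
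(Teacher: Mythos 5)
Your proposal is correct and follows essentially the same route as the paper: expand $\left|\perm A^{[n]}_{\mathbf{z}}\right|^2$ as a double sum over permutations, apply the Weingarten formula so the row deltas contribute $\mu(\mathbf{z})$ and the column deltas collapse one permutation sum, and reduce everything to $n!\sum_{\rho}W\!g(\rho,m)$. The only difference is that you quote the summation identity $\sum_{\rho}W\!g(\rho,m)=\frac{(m-1)!}{(m+n-1)!}$ as standard, whereas the paper derives it (Corollary~\ref{cor:X}) from the moment $\EX(|A_{1,1}|^{2n})$; your Schur's-lemma alternative is precisely the ``quantum theoretic'' argument the paper attributes to Arkhipov and Kuperberg.
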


The proof is immediate from quantum theoretic considerations \citep{AG:2012}.  It can also be derived from properties of the Weingarten function in random matrix theory as follows.

Suppose $A$ is an $m$-dimensional Haar random unitary matrix. The Weingarten function is defined to be
$$ W\!g(\alpha,m) = \EX (A_{1,1}\cdots A_{1,n}\bar{A}_{1,\alpha(1)}\cdots \bar{A}_{n,\alpha(n)}),$$
for $n \leqslant m$ and $\alpha$ in $\pi[n]$, where $\bar{A}_{i,j}$ is the complex conjugate of $A_{i,j}$ and $\pi[n]$ is the set of permutations of $[n]$. 

We make use of the following result
\vspace{1em}
\citep{collins2003moments,collins2006integration}.
\begin{thm}\label{thm:Collins}
	
	Let $\mathbf{i} = (i_1,\dots,i_n)$, $\mathbf{i'} = (i'_1,\dots,i'_n)$, $\mathbf{j} = (j_1,\dots,j_n)$ and $\mathbf{j'} = (j'_1,\dots,j'_n)$ be arrays of positive integers  and let
	$$ \delta_\alpha(\mathbf{i,i'}) = \prod_{k=1}^n \delta(i_k,i_{\alpha(k)}) \quad \text{for $\alpha \in \pi[n]$}.$$  
	then
	$$\EX (A_{i_1,j_1}\cdots A_{i_n,j_n}\bar{A}_{i'_1,j'_1}\cdots \bar{A}_{j'_n,j'_n}) 
	= \sum_{\alpha,\beta \in \pi[n]} \delta_\alpha(\mathbf{i,i'}) \delta_\beta(\mathbf{j,j'}) W\!g(\alpha \beta^{-1},m)$$
\end{thm}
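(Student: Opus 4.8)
The plan is to prove the identity by combining the bi-invariance of Haar measure with Schur--Weyl duality, following the representation-theoretic route of \citet{collins2006integration}. I would first repackage the scalar $2n$-fold moment as a single averaged superoperator. Writing $P_\sigma$ for the operator on $(\mathbb{C}^m)^{\otimes n}$ that permutes tensor factors according to $\sigma \in \pi[n]$, and $E_{\mathbf{j},\mathbf{j'}}$ for the matrix unit with entries $\delta_{\mathbf{a},\mathbf{j}}\delta_{\mathbf{b},\mathbf{j'}}$, a direct expansion shows that the quantity to be computed is exactly the $(\mathbf{i},\mathbf{i'})$ entry of $\mathcal{E}(E_{\mathbf{j},\mathbf{j'}})$, where $\mathcal{E}(X) = \EX[A^{\otimes n} X (A^{\otimes n})^{\dagger}]$. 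So the first step is the purely notational reduction of the theorem's left-hand side to a single entry of $\mathcal{E}$ applied to a matrix unit.

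The heart of the argument is to identify $\mathcal{E}$. Because Haar measure is invariant under $A \mapsto UA$, under $A\mapsto AU$ for any fixed unitary $U$, and under $A \mapsto A^{\dagger}$, the map $\mathcal{E}$ is idempotent, self-adjoint in the Hilbert--Schmidt inner product, and $\mathcal{E}(X)$ commutes with $U^{\otimes n}$ for every $U$. Schur--Weyl duality identifies the commutant of $\{U^{\otimes n}\}$ with $\mathrm{span}\{P_\sigma : \sigma\in\pi[n]\}$, whose elements are linearly independent precisely because $n\leqslant m$. Hence $\mathcal{E}$ is the orthogonal projection onto that span. Evaluating this projection on $E_{\mathbf{j},\mathbf{j'}}$ produces the bilinear shape asserted by the theorem: the range membership forces the entries to be a combination of the delta products $\delta_\sigma(\mathbf{i},\mathbf{i'}) = (P_\sigma)_{\mathbf{i},\mathbf{i'}}$, while projecting $E_{\mathbf{j},\mathbf{j'}}$ against each $P_\tau$ introduces the second family $\delta_\tau(\mathbf{j},\mathbf{j'})$, giving $\sum_{\sigma,\tau}\delta_\sigma(\mathbf{i},\mathbf{i'})\,\delta_\tau(\mathbf{j},\mathbf{j'})\,(G^{-1})_{\sigma,\tau}$.

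It then remains to identify the coefficient matrix. The projection equations are governed by the Gram matrix $G_{\sigma,\tau} = \langle P_\sigma, P_\tau\rangle_{HS} = m^{c(\sigma^{-1}\tau)}$, where $c(\cdot)$ counts cycles, and solving them inverts $G$; the entries of $G^{-1}$ are by definition the Weingarten values. The final step is a bookkeeping reconciliation: since the Weingarten function is a class function and $\sigma^{-1}\tau$, $\tau^{-1}\sigma$, and $\sigma\tau^{-1}$ all share a cycle type, I can rewrite $(G^{-1})_{\alpha,\beta}$ as $W\!g(\alpha\beta^{-1},m)$ and then verify that this inverse-Gram characterisation agrees with the paper's defining expectation for $W\!g$ by specialising the just-derived formula to the index pattern appearing in that definition.

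I expect the main obstacle to be the invocation of Schur--Weyl duality together with the clean identification of the coefficient function with the Weingarten function as the paper has defined it: the representation-theoretic input, namely linear independence of the $P_\sigma$ for $n\leqslant m$ and the description of the commutant, carries the structural content, while matching the inverse-Gram-matrix values to the explicit expectation $W\!g(\alpha,m)$ demands care with permutation conventions. A lighter route for the structural part is worth recording as a sanity check: invariance under diagonal phase rotations $A\mapsto DA$ and $A\mapsto AD'$ already forces the moment to vanish unless $\mathbf{i}$ is a rearrangement of $\mathbf{i'}$ and $\mathbf{j}$ of $\mathbf{j'}$, explaining the appearance of the two delta families; the genuinely hard content is that the surviving coefficient depends only on the relative permutation $\alpha\beta^{-1}$ and equals $W\!g$, and for that the commutant argument above seems unavoidable.
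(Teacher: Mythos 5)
Your proposal is correct and matches, in essentially all details, the standard conditional-expectation/Schur--Weyl argument of \citet{collins2003moments} and \citet{collins2006integration}, which is precisely what the paper relies on: the paper itself offers no proof of Theorem~\ref{thm:Collins}, importing it by citation from those references. Your reduction of the moment to the $(\mathbf{i},\mathbf{i'})$ entry of $\mathcal{E}(E_{\mathbf{j},\mathbf{j'}})$, the identification of $\mathcal{E}$ as the orthogonal projection onto $\mathrm{span}\{P_\sigma\}$ (with linear independence of the $P_\sigma$ guaranteed by $n \leqslant m$), and the inversion of the Gram matrix $m^{c(\sigma^{-1}\tau)}$ followed by the specialisation check against the paper's defining expectation for $W\!g$ are exactly the steps of that proof.
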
  
As a immediate corollary we have:
\vspace{0.5em} 
\begin{corollary}\label{cor:X}
	$$\sum_{\alpha \in \pi[n]} W\!g(\alpha,m) = \frac{1}{m(m+1)\cdots(m + n -1)}.$$
\end{corollary}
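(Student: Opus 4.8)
The plan is to isolate $\sum_{\alpha}W\!g(\alpha,m)$ as a single joint moment of the entries of $A$ by specialising the index arrays in Theorem \ref{thm:Collins}, and then to evaluate that moment directly from the distribution of one row of a Haar unitary. So the ``immediate'' part is a choice of indices, and the genuine content is the moment computation.

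First I would take all row indices equal, $i_k = i'_k = 1$ for every $k$, and take the column indices to be the distinct values $j_k = j'_k = k$. With this choice the row Kronecker factor $\delta_\alpha(\mathbf{i},\mathbf{i'}) = \prod_k \delta(1,1)$ equals $1$ for every permutation $\alpha$, whereas the column factor $\delta_\beta(\mathbf{j},\mathbf{j'}) = \prod_k \delta(k,\beta(k))$ vanishes unless $\beta$ is the identity. The double sum in Theorem \ref{thm:Collins} therefore collapses to $\sum_{\alpha} W\!g(\alpha\,\mathrm{id}^{-1},m) = \sum_\alpha W\!g(\alpha,m)$, while its left-hand side becomes $\EX\prod_{k=1}^n |A_{1,k}|^2$. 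This identifies the quantity we want with a concrete moment of a single row.

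Next I would evaluate $\EX \prod_{k=1}^n |A_{1,k}|^2$. The first row of a Haar unitary is a uniformly distributed unit vector on the complex sphere in $\mathbb{C}^m$, which can be realised as $g/\|g\|$ for a vector $g$ of i.i.d.\ standard complex Gaussians. Writing $R = \sum_{j=1}^m |g_j|^2$, the normalised squared moduli $(|g_1|^2/R,\dots,|g_m|^2/R)$ are $\mathrm{Dir}(1,\dots,1)$ and independent of $R$, so that $\EX\prod_{k=1}^n |A_{1,k}|^2 = \frac{\EX\prod_{k=1}^n |g_k|^2}{\EX(R^n)}$. Since the $|g_k|^2$ are independent with unit mean the numerator is $1$, and since $R$ is $\Gamma(m,1)$ the denominator is $\EX(R^n) = m(m+1)\cdots(m+n-1)$; dividing gives exactly the claimed value. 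Equivalently one may simply quote the Dirichlet moment $\EX\prod_{k=1}^n X_k = \Gamma(m)/\Gamma(m+n)$ for $X \sim \mathrm{Dir}(1,\dots,1)$.

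The index specialisation is routine and the moment computation is standard, so there is no deep obstacle; the only points needing care are bookkeeping ones. I would make sure that it is the \emph{columns}, not the rows, that pin down the identity permutation, since that is precisely what lets $\alpha$ range over all of $\pi[n]$ in the surviving sum, and I would verify the independence of the Dirichlet direction from the Gamma-distributed radius $R$ (equivalently, the validity of the quoted Dirichlet moment). A small sanity check at $n=1,2$, giving $1/m$ and $1/[m(m+1)]$, confirms the specialisation has been set up correctly.
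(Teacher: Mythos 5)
Your proof is correct, but it reaches the identity by a genuinely different specialisation than the paper. The paper sets \emph{all} four index arrays to $(1,\dots,1)$, so both Kronecker factors are identically $1$ and Theorem~\ref{thm:Collins} yields $\EX(|A_{1,1}|^{2n}) = n!\sum_{\alpha}W\!g(\alpha,m)$; it then evaluates the one-dimensional moment $\EX(W^n)$ for $W=|A_{1,1}|^2\sim\mathrm{Beta}(1,m-1)$ by a direct integral, producing an $n!$ that cancels. You instead take distinct column indices $j_k=j'_k=k$, which forces $\beta=\mathrm{id}$ and collapses the double sum to $\sum_\alpha W\!g(\alpha,m)$ with no combinatorial factor, at the price of needing the joint moment $\EX\prod_{k=1}^n|A_{1,k}|^2$ of $n$ entries of one row, which you evaluate via the Gaussian/Dirichlet representation of a uniformly distributed unit vector. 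Both computations are valid (note your choice implicitly uses $n\leqslant m$ so that the columns $1,\dots,n$ exist, which is assumed throughout the paper); the paper's route trades a cleaner index bookkeeping for a more elementary one-dimensional Beta integral, while yours trades a slightly heavier distributional fact (independence of the Dirichlet direction from the Gamma radius, or equivalently the Dirichlet product moment) for a sum that collapses immediately without the $n!$ cancellation.
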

\begin{proof}
	Apply Theorem~\ref{thm:Collins} to $\EX(|A_{1,1}|^{2n})$ with $\mathbf{i}=\mathbf{i'} = \mathbf{j}=\mathbf{j'}= (1,\dots,1)$, so that 
	$$\EX( |A_{1,1}|^{2n}) = \EX(A_{1,1}\cdots A_{1,1}\bar{A}_{1,1}\cdots \bar{A}_{1,1}) = \sum_{\alpha,\beta \in \pi[n]} W\!g(\alpha \beta^{-1},m) = n!\sum_{\alpha \in \pi[n]} W\!g(\alpha ,m).$$
	The last reduction follows since  $\{\alpha \beta^{-1} : \alpha \in \pi[n]\} = \pi[n]$ for any given $\beta \in \pi[n]$. 
	
	From ~\citep{PetzRaffey:2004}, for example, with $W = |A_{1,1}|^2$  we have
	$$\EX (|A_{1,1}|^{2n}) = \EX (W^n) = \int_0^1 (m-1) w^n (1-w)^{m-2} dw = \frac{n!}{m(m+1)\cdots (m+n-1)},$$
	and the proof is complete.   
\end{proof}
Turning now to the \BS/ distribution on multisets of size $n$ with elements in $[m]$ where a multiset is represented by an array  $\mathbf{z} = (z_1,\dots,z_n)$ consisting of elements of $[m]$ in non-decreasing order.  
As before $\mu(\mathbf{z}) = \prod_{j=1}^m s_j!$ where $s_j$ is the multiplicity of the value $j$ in $\mathbf{z}$.

From the definition \eqref{eq:Bz_s} we now have
\begin{align}
	\mu(\mathbf{z}) \mathcal{Q}(\mathbf{z})
	&= \EX \left| \sum_{\sigma} \prod_{k=1}^n A_{z_k,\sigma_k}\right|^2, \sigma \in \pi[n] \nonumber\\
	&= \EX \left(\sum_\sigma \prod_{k=1}^n A_{z_k,\sigma_k }\right)\left(\sum_\tau \prod_{k=1}^n \bar{A}_{z_k,\tau_k}\right),\tau \in \pi[n] \nonumber\\
	&= \sum_{\sigma,\tau \in \pi[n]} \EX \left( \prod_{k=1}^n A_{z_k,\sigma_k} \bar{A}_{z_k,\tau_k}. \right)\nonumber\\ 
	&= \sum_{\sigma,\tau \in \pi[n]} \sum_{\alpha,\beta \in \pi[n]} 
	\delta_\alpha(\mathbf{z, z}) \delta_\beta(\sigma, \tau) W\!g(\alpha \beta^{-1},m)\nonumber \\
	\label{eq:final}
	&= \sum_{\alpha,\beta \in \pi[n]} \delta_\alpha(\mathbf{z, z}) W\!g(\alpha \beta^{-1},m) \sum_{\sigma,\tau \in \pi[n]} \delta_\beta(\sigma, \tau)
	\end{align}
	using Theorem \ref{thm:Collins}. For each $\beta$ the final summation in \eqref{eq:final} is $n!$, so the expectation becomes
	\begin{align} 
	\mu(\mathbf{z}) \mathcal{Q}(\mathbf{z})
	&= n! \sum_{\alpha,\beta \in \pi[n]} \delta_\alpha(\mathbf{z, z}) W\!g(\alpha \beta^{-1},m)\nonumber\\
	&= n! \sum_{\alpha \in \pi[n]} \delta_\alpha(\mathbf{z, z}) \sum_{\beta \in \pi[n]} W\!g(\alpha \beta^{-1},m)\nonumber\\
	&= n! \sum_{\alpha \in \pi[n]} \delta_\alpha(\mathbf{z, z}) \frac{1}{m(m+1)\cdots(m + n -1)}\nonumber\\
	&= \binom{m+n-1}{n}^{-1} \sum_{\alpha \in \pi[n]} \delta_\alpha(\mathbf{z, z})\nonumber
	\end{align}
	from Corollary \ref{cor:X} and collecting factorial terms. The last term counts the number of ways that the elements of the array $\mathbf{z}$ can be permuted without changing the array, so that  
	$\sum_{\alpha \in \pi[n]} \delta_\alpha(\mathbf{z, z}) = \mu(\mathbf{z}),$ and the result \eqref{eq:AK} follows.

\section{Average-case time complexity of \BS/}\label{sec:av-case}

The average-case complexity of Algorithm A is the expected value of \eqref{eq:complexity} when $A$ is drawn from the Haar random unitary distribution. We start by considering the term $\prod_{\nu=1}^m (s_\nu^{(n)} + 1)$. Since the multiplicity array 
$\mathbf{s}^{(n)}$ is an alternative representation of the multiset $\mathbf{z}$, we know from Theorem \ref{thm:AK} that $\mathbf{s}^{(n)}$ is uniformly distributed over the set of all multiplicity arrays, $\Phi_{m,n}^*$. 

Recall that $|\Phi_{m.n}^*| = \binom{m+n-1}{m-1}$ or equivalently $\binom{m+n-1}{n}$ as can be shown with the usual ``stars and bars'' argument of \citet{Feller:1968}, i.e.\ we place $m-1$ bars at locations in $[1,\dots,m+n-1]$ and $n$ stars at the remaining locations.  The total number of such arrangements is $\binom{m+n-1}{m-1}$. Adding further bars at each end, i.e.\ at locations $0$ and $m+n$, the associated array $\mathbf{s}^{(n)}$ is the count of stars between bars.

\vspace{0.5em}
\begin{lemma}\label{lm:multiplicity} Suppose that $\mathbf{s}^{(n)}$ is sampled uniformly from $\Phi_{m,n}^*$, the set of all possible multiplicity arrays then 
\begin{equation}\label{eq:multiplicity} \EX \left(\prod_{\nu=1}^m(s_\nu^{(n)} + 1)\right) =  \binom{ m+ n-1}{n}^{\!-1} \binom{2 m + n -1}{n}.
\end{equation}
\end{lemma}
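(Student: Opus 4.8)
The plan is to exploit the uniformity of $\mathbf{s}^{(n)}$ to turn the expectation into a pure counting statement, and then to evaluate that count by a double-counting argument that mirrors the stars-and-bars setup already in place. First I would write
\begin{equation}
\EX\left(\prod_{\nu=1}^m(s_\nu^{(n)}+1)\right) = \binom{m+n-1}{n}^{\!-1}\sum_{\mathbf{s}}\prod_{\nu=1}^m(s_\nu+1),\nonumber
\end{equation}
where the sum runs over all multiplicity arrays $\mathbf{s}=(s_1,\dots,s_m)$ with $s_\nu\geq 0$ and $\sum_{\nu=1}^m s_\nu = n$, and the prefactor is just $|\Phi_{m,n}^*|^{-1}$. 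It then suffices to prove the identity $\sum_{\mathbf{s}}\prod_{\nu=1}^m(s_\nu+1) = \binom{2m+n-1}{n}$.

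To evaluate the sum combinatorially, I would read each factor $s_\nu+1$ as the number of choices of an integer $r_\nu$ with $0\leq r_\nu\leq s_\nu$. Then $\prod_{\nu}(s_\nu+1)$ counts the tuples $\mathbf{r}=(r_1,\dots,r_m)$ dominated coordinatewise by $\mathbf{s}$, so the whole sum counts pairs $(\mathbf{s},\mathbf{r})$ subject to $\sum_\nu s_\nu = n$ and $0\leq r_\nu\leq s_\nu$. Substituting $t_\nu = s_\nu - r_\nu \geq 0$ makes this the number of nonnegative integer solutions of $\sum_{\nu=1}^m r_\nu + \sum_{\nu=1}^m t_\nu = n$ in the $2m$ variables $r_1,\dots,r_m,t_1,\dots,t_m$. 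By the same stars-and-bars count used for $|\Phi_{m,n}^*|$, now with $n$ stars and $2m-1$ bars, this is exactly $\binom{2m+n-1}{n}$, which gives the claim.

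As an equivalent and even shorter route I would use generating functions: since $\sum_{s\geq 0}(s+1)x^s = (1-x)^{-2}$, the sum in question is $[x^n]\prod_{\nu=1}^m(1-x)^{-2} = [x^n](1-x)^{-2m} = \binom{2m+n-1}{n}$.

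There is no substantial obstacle here; the only genuine step is recognising the correct reformulation, after which the answer is forced. The mildest care is needed in the bijective argument, namely checking that $(\mathbf{s},\mathbf{r})\mapsto(\mathbf{r},\mathbf{t})$ with $t_\nu=s_\nu-r_\nu$ is a bijection onto the nonnegative solutions, which is immediate from its inverse $s_\nu=r_\nu+t_\nu$; in the generating-function version, the analogous point is confirming the single-variable expansion $\sum_{s\geq 0}(s+1)x^s=(1-x)^{-2}$. Everything else is a direct application of the standard $\binom{2m+n-1}{n}$ stars-and-bars formula.
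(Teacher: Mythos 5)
Your proof is correct and is essentially the paper's own argument: your substitution $s_\nu = r_\nu + t_\nu$ and the resulting count of nonnegative solutions in $2m$ variables is exactly the paper's device of inserting one extra bar into each of the $m$ gaps of the stars-and-bars configuration, yielding $\binom{2m+n-1}{n}$ arrangements of $2m-1$ bars and $n$ stars. The generating-function aside is a harmless alternative verification of the same identity.
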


\begin{proof}
	To see this, start with the stars and bars arrangement for a particular array $\mathbf{s}^{(n)}$ as above. Now consider adding a new bar between each existing neighbouring pair of bars.  If there are $s_i^{(n)}$ stars between a pair of bars, the new bar can be located in $1$ of $s_i^{(n)} + 1$ places, for example if there is one star the new bar can be before or after it.  The number of arrangements of stars and bars for a given array $\mathbf{s}^{(n)}$ is then $\prod_{\nu=1}^m (s^{(n)}_\nu +1)$.
	
Since $\mathbf{s}^{(n)}$ is uniformly distributed on $\Phi_{m,n}^*$ and $|\Phi_{m,n}^*| = \binom{m+n-1}{n}$, 
	$$\EX \left(\prod_{\nu=1}^m(s^{(n)}_\nu + 1)\right) =  \binom{m+n-1}{n}^{-1} \sum_{\mathbf{s}^{(n)} \in \Phi^*_{m,n}} 
	\prod_{\nu=1}^m(s^{(n)}_\nu + 1).$$
	The result now follows because $\sum_{\mathbf{s}^{(n)} \in \Phi_{m,n}^*} \prod_{\nu=1}^m(s^{(n)}_\nu + 1)$ is the total number of arrangements of the new and old stars and bars, i.e.\ the number of ways of placing $2m -1$ bars among $2 m + n -1$ integer locations. 
\end{proof}
\begin{corollary}
	With the conditions of Lemma \ref{lm:multiplicity} and supposing that $m = \theta n$ for some fixed value $\theta$ then 
	$$\EX \left(\prod_{\nu=1}^m(s_\nu^{(n)} + 1)\right) \sim \left(\frac{2(\theta+1)}{2\theta+1}\right)^{1/2} \left[\frac{(2 \theta+1)^{2 \theta+1}}{(4 \theta)^\theta(\theta+1)^{\theta+1}}\right]^n \quad \text{as $ n \to \infty$}.$$
	In particular $\EX \prod_{\nu=1}^m(s_\nu^{(n)} + 1) \sim (4/3)^{1/2} (27/16)^n \approx 1.15 (1.69)^n$ when $\theta = 1$, and $\EX \prod_{\nu=1}^m(s_\nu^{(n)} + 1) \sim  2^n$ as $\theta \to \infty.$ 
\end{corollary}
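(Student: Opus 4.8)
The plan is to start from the exact identity of Lemma~\ref{lm:multiplicity} and carry out a Stirling analysis of the resulting ratio of binomial coefficients with $m=\theta n$. First I would clear the common factor $n!$ to rewrite the right-hand side of \eqref{eq:multiplicity} as a pure ratio of factorials, and then as a ratio of Gamma functions,
$$\binom{m+n-1}{n}^{\!-1}\!\binom{2m+n-1}{n} = \frac{(2m+n-1)!\,(m-1)!}{(2m-1)!\,(m+n-1)!} = \frac{\Gamma(2m+n)\,\Gamma(m)}{\Gamma(2m)\,\Gamma(m+n)}.$$
The last form is the convenient one: after substituting $m=\theta n$ the four arguments $(2\theta+1)n$, $\theta n$, $2\theta n$, $(\theta+1)n$ are exact multiples of $n$, which avoids the spurious lower-order shifts present in the factorial form.

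Next I would take logarithms and apply the expansion $\ln\Gamma(z) = (z-\tfrac12)\ln z - z + \tfrac12\ln(2\pi) + O(1/z)$ to each of the four terms. Two bookkeeping cancellations drive the whole estimate. The constants $\tfrac12\ln(2\pi)$ and the linear terms $-z$ cancel because the numerator and denominator arguments share the same sum $(3\theta+1)n$; and, after writing $\ln z_i = \ln(\mathrm{coeff}_i)+\ln n$, both the $n\ln n$ and the $\ln n$ contributions cancel since $(2\theta+1)+\theta-2\theta-(\theta+1)=0$. What survives splits cleanly into an $O(n)$ piece and an $O(1)$ piece.

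The $O(n)$ piece is the exponential rate $n\bigl[(2\theta+1)\ln(2\theta+1)+\theta\ln\theta-2\theta\ln(2\theta)-(\theta+1)\ln(\theta+1)\bigr]$; using $2\theta\ln(2\theta)=2\theta\ln2+2\theta\ln\theta$ and absorbing $\theta\ln\theta-2\theta\ln\theta=-\theta\ln\theta$ turns its exponential into $\bigl[(2\theta+1)^{2\theta+1}/((4\theta)^\theta(\theta+1)^{\theta+1})\bigr]^n$, matching the stated base. The $O(1)$ piece comes entirely from the $-\tfrac12\ln z_i$ terms and equals $-\tfrac12\ln\tfrac{(2\theta+1)\theta}{2\theta(\theta+1)}=\tfrac12\ln\tfrac{2(\theta+1)}{2\theta+1}$, whose exponential is exactly the prefactor $\bigl(2(\theta+1)/(2\theta+1)\bigr)^{1/2}$. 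The two special cases then follow by substitution: $\theta=1$ gives $(4/3)^{1/2}(27/16)^n$, and letting $\theta\to\infty$ sends the base to $2$ and the prefactor to $1$, yielding $\sim 2^n$.

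There is no deep obstacle here; the computation is a routine Stirling estimate, and the exponential rate is immediate. The one place demanding care is the constant prefactor. Applied carelessly to the factorial form, where the arguments sit a constant below clean multiples of $n$, Stirling lets those $-1$ shifts interact with the large exponents and can flip the prefactor to its reciprocal $\bigl((2\theta+1)/(2(\theta+1))\bigr)^{1/2}$. Passing to Gamma functions so that every argument is an exact multiple of $n$, and retaining the $(z-\tfrac12)\ln z$ term of the expansion, is the clean way to pin down the correct constant; one can confirm numerically (for instance at $\theta=1$, $n=10$) that $(4/3)^{1/2}$, not $(3/4)^{1/2}$, is the right value.
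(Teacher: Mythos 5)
Your proposal is correct and follows exactly the route the paper takes: its entire proof is ``Apply Stirling's formula to the factorial terms in \eqref{eq:multiplicity},'' and your argument is simply that computation carried out in full (the passage to Gamma functions and the cancellation bookkeeping are sound, and the exponential rate and the prefactor $\bigl(2(\theta+1)/(2\theta+1)\bigr)^{1/2}$ both check out, e.g.\ numerically at $\theta=1$, $n=10$). No further comment is needed.
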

\begin{proof} Apply Stirling's formula to the factorial terms in \eqref{eq:multiplicity}. \end{proof}
\vspace{0.5em}
\begin{corollary} Let $B$ a $k \times k$ complex matrix with repeated rows having multiplicity $\mathbf{s}^{(k)}$ where $\mathbf{s}^{(k)}$ is uniformly distributed on $\Phi_{m,k}^*$ then the expected running time of $\perm B$ is of order
	$$ k \binom{ m+ k-1}{k}^{\!-1} \binom{2 m + k -1}{k}. $$
\end{corollary}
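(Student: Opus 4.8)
The plan is to read the corollary as a direct combination of two results already in hand: the pointwise running-time bound for a single permanent from Theorem~\ref{thm:guan}, and the expectation of the multiplicity product from Lemma~\ref{lm:multiplicity}. First I would invoke Theorem~\ref{thm:guan}: for a fixed $k \times k$ matrix $B$ whose repeated rows have multiplicity array $\mathbf{s}^{(k)} = (s_1^{(k)},\dots,s_m^{(k)})$, the Guan-code implementation of Ryser's formula computes $\perm B$ in time $\bigo{k \prod_{\nu=1}^m (s_\nu^{(k)}+1)}$. This is a deterministic bound holding pointwise for every realisation of the multiplicities, so it is exactly the object whose average we wish to control.

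Next I would take the expectation over the randomness in $\mathbf{s}^{(k)}$, which by hypothesis is uniform on $\Phi_{m,k}^*$. Since the leading factor $k$ is deterministic and the bound depends linearly on the product $\prod_{\nu=1}^m (s_\nu^{(k)}+1)$, linearity of expectation gives an expected running time of order $k \, \EX \prod_{\nu=1}^m (s_\nu^{(k)}+1)$. I would then substitute the closed form of Lemma~\ref{lm:multiplicity}, read with the multiset size $n$ replaced by $k$, namely $\EX \prod_{\nu=1}^m (s_\nu^{(k)}+1) = \binom{m+k-1}{k}^{-1}\binom{2m+k-1}{k}$. This yields the claimed order $k \binom{m+k-1}{k}^{-1}\binom{2m+k-1}{k}$ and completes the argument.

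There is no real obstacle here, as the statement is an immediate corollary of two established results; the only point deserving a word of care is parameter compatibility. Lemma~\ref{lm:multiplicity} was phrased with summation size $n$, but its derivation (the stars-and-bars counting of arrangements over $2m+n-1$ locations) is valid verbatim for any multiset size, so applying it with size $k$ is legitimate. Consequently the substitution in the final step is justified, and the two ingredients combine without any hidden dependence or loss in the constants.
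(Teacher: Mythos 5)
Your proposal is correct and matches the paper's own argument, which likewise just cites Theorem~\ref{thm:guan} for the pointwise bound $\bigo{k\prod_{\nu=1}^m(s_\nu^{(k)}+1)}$ and Lemma~\ref{lm:multiplicity} (with $n$ replaced by $k$) for its expectation. You simply spell out the linearity-of-expectation step and the parameter substitution that the paper leaves implicit.
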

\begin{proof} This follows directly from Lemma \ref{lm:multiplicity} and Theorem \ref{thm:guan}. \end{proof}
We now prepare to prove the generalisation of Theorem~\ref{thm1}.
\begin{thm}\label{thm:1.1}
	The average-case time complexity of \BS/ is bounded above by a term of order 
	$$ \frac{n(m+n)}{m} \binom{ m+ n}{n+1}^{\!-1} \binom{2 m + n}{n+1} + n^2m \quad \text{as $m,n \to \infty$}. $$
	In particular when $m= \theta n$ for a fixed value of $\theta \geqslant 1$ this is of order 
	$$n \left[\frac{(2 \theta+1)^{2 \theta+1}}{(4 \theta)^\theta(\theta+1)^{\theta+1}}\right]^n \quad \text{as $ n \to \infty$}.$$
\end{thm}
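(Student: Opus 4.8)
The plan is to take expectations in the operation count \eqref{eq:complexity} term by term and reduce it, via the marginal uniformity of Theorem~\ref{thm:AK}, to a sum of binomial ratios that can then be controlled by a geometric series.

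First I would invoke the structural observation recorded just after Algorithm~A: at the end of stage $k$ the sorted partial output $(r_1,\dots,r_k)$ is itself an exact sample from the \BS/ distribution on $\Phi_{m,k}$, so its multiplicity array $\mathbf{s}^{(k)}$ is distributed as the multiplicity array of a \BS/ sample with $k$ photons and $m$ modes. Averaging over the Haar measure and applying Theorem~\ref{thm:AK} with $n$ replaced by $k$, the marginal law of $\mathbf{s}^{(k)}$ --- over both the internal randomness of the sampler and the choice of $A$ --- is uniform on $\Phi_{m,k}^*$. Lemma~\ref{lm:multiplicity}, applied with $k$ in place of $n$, then yields
\[
\EX\!\left(\prod_{\nu=1}^m (s_\nu^{(k)}+1)\right) = \binom{m+k-1}{k}^{\!-1}\binom{2m+k-1}{k},
\]
so that taking expectations in \eqref{eq:complexity} and using linearity gives an expected operation count of order $\sum_{k=1}^n T_k + n^2m$, where $T_k = k\binom{m+k-1}{k}^{-1}\binom{2m+k-1}{k}$. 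I expect this passage from a random, matrix-dependent running time to the clean deterministic sum to be the main conceptual obstacle: it is exactly the self-reducibility of the sampler combined with Theorem~\ref{thm:AK} that makes the average-case analysis possible, and the two inputs must be assembled with some care (in particular the expectation is over both sources of randomness).

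The remaining work is to show $\sum_{k=1}^n T_k$ has the claimed order. I would compute the ratio of consecutive terms,
\[
\frac{T_k}{T_{k-1}} = \frac{k}{k-1}\cdot\frac{2m+k-1}{m+k-1} \;\geq\; \frac{2m+k-1}{m+k-1},
\]
and note that the right-hand factor is decreasing in $k$, hence at least $\gamma := (2m+n-1)/(m+n-1)$ for every $k\leq n$. Since $\gamma>1$ (and $\gamma\to(2\theta+1)/(\theta+1)$, bounded away from $1$, when $m=\theta n$), the terms grow at least geometrically, so $T_{n-j}\leq\gamma^{-j}T_n$ and the sum is dominated by its last term:
\[
\sum_{k=1}^n T_k \;\leq\; T_n\,\frac{\gamma}{\gamma-1} \;=\; \frac{2m+n-1}{m}\,T_n \;<\; \frac{2m+n}{m}\,T_n.
\]
A short manipulation using $\binom{m+n}{n+1}=\frac{m+n}{n+1}\binom{m+n-1}{n}$ and the analogous identity for $\binom{2m+n}{n+1}$ rewrites $\frac{2m+n}{m}T_n$ as $\frac{n(m+n)}{m}\binom{m+n}{n+1}^{-1}\binom{2m+n}{n+1}$, which is the first displayed bound; the additive $n^2m$ is the deterministic $\bigo{mn^2}$ contribution of \eqref{eq:complexity}.

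Finally, for the regime $m=\theta n$ I would apply the Corollary following Lemma~\ref{lm:multiplicity} (Stirling's formula on the binomial ratio) to $\binom{m+n}{n+1}^{-1}\binom{2m+n}{n+1}$, observing that shifting the lower index from $n$ to $n+1$ leaves the exponential rate unchanged. The prefactor $\frac{n(m+n)}{m}=\frac{\theta+1}{\theta}\,n$ is linear in $n$, and $\theta n^3$ is negligible against an exponentially growing leading term, so the whole bound is of order $n\,\rho^n$ with $\rho=(2\theta+1)^{2\theta+1}/((4\theta)^\theta(\theta+1)^{\theta+1})$, recovering Theorem~\ref{thm1} at $\theta=1$.
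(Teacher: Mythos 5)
Your proposal is correct, and it follows the paper's proof in every step but one. Both arguments rest on the same two pillars: the self-reducibility observation that the stage-$k$ partial output is an exact \BS/ sample on $\Phi_{m,k}$, which combined with Theorem~\ref{thm:AK} and Lemma~\ref{lm:multiplicity} (applied with $k$ in place of $n$) reduces the expected cost of \eqref{eq:complexity} to $\sum_{k=1}^n T_k + \bigo{mn^2}$ with $T_k = k\binom{m+k-1}{k}^{-1}\binom{2m+k-1}{k}$; and a final application of Stirling's formula for the $m=\theta n$ regime. Where you diverge is in bounding $\sum_k T_k$: the paper evaluates this sum exactly in closed form, obtaining
\begin{equation*}
\frac{[n(m+1)-m+1](m+n)}{(m+1)(m+2)}\binom{m+n}{n+1}^{\!-1}\binom{2m+n}{n+1} + \frac{2m(m-1)}{(m+1)(m+2)},
\end{equation*}
whereas you dominate it by a geometric series via the term ratio $T_k/T_{k-1} = \frac{k}{k-1}\cdot\frac{2m+k-1}{m+k-1} \geq \frac{2m+n-1}{m+n-1} = \gamma$, giving $\sum_k T_k \leq \frac{\gamma}{\gamma-1}T_n = \frac{2m+n-1}{m}T_n$. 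I checked your ratio computation, the monotonicity claim, and the final identity $\frac{2m+n}{m}T_n = \frac{n(m+n)}{m}\binom{m+n}{n+1}^{-1}\binom{2m+n}{n+1}$; all are correct, so your bound reproduces the stated expression exactly. The trade-off is mild: the paper's exact summation pins down the leading constant and would support a sharper statement, while your geometric estimate is more elementary, avoids verifying a closed-form identity, and is all that the $\bigo{\cdot}$ claim requires. Your explicit remark that the expectation must be taken jointly over the Haar randomness and the sampler's internal randomness is a point the paper leaves implicit, and is worth making.
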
 
\begin{proof}
Note that the first term in the time complexity of Algorithm A given in \eqref{eq:complexity}	is equivalent to the total time complexity for evaluating the set of permanents of $A_{r_1,\dots, r_k}^{[k]}$, $k \in [n]$.  In particular, evaluation of the permanent of the final matrix $A_\mathbf{r}^{[n]}$ has time complexity $\bigo{n\prod_{\nu=1}^m(s_\nu^{(n)}+1)}$ where $\mathbf{s}^{(n)}$ is the array of multiplicities in $\mathbf{r}$.  From Lemma \ref{lm:multiplicity} this has a simple reduction in the average case, since $\mathbf{s}^{(n)}$ is uniformly distributed on $\Phi^*_{m,n}$ by Theorem \ref{thm:AK}.  

At the intermediate stage $k$, the matrix $A_{r_1,\dots, r_k}^{[k]}$ can be viewed as the final matrix in a \BS/ algorithm with reduced size, i.e.\ where $k$ columns are taken from $A$ rather than $n$.  Again the average-case complexity has a simple form from Lemma \ref{lm:multiplicity} so the expected value of the first term in the time complexity of Algorithm A given in \eqref{eq:complexity}, averaging over $A$, is 
\begin{align*} 
\EX \sum_{k=1}^n k \prod_{\nu=1}^m (s_\nu^{(k)}+1)  
&= \sum_{k=1}^n  k \binom{ m+ k-1}{k}^{\!-1} \binom{2 m + k -1}{k}\\
&= \frac{[n(m+1)-m+1](m+n)}{(m+1)(m+2)}\binom{m+ n}{n+1}^{\!-1} \binom{2 m + n}{n+1} + \frac{2 m(m-1)}{(m+1)(m+2)}\\
&= \biggo{\frac{n(m+n)}{m} \binom{ m+ n}{n+1}^{\!-1} \binom{2 m + n}{n+1}}. 
\end{align*}
Incorporating the second term in \eqref{eq:complexity} gives the complexity as claimed. Finally, Stirling's formula gives the asymptotic form. 
\end{proof}


\bibliographystyle{apalike}
\bibliography{fasterBS}
\end{document}